\newtheorem{theorem}{Theorem}
\newtheorem{proposition}[theorem]{Proposition}
\newtheorem{lemma}[theorem]{Lemma}
\newcommand{\bbR}{{\mathbb R}}
\def\t{\widetilde}
\title[How one can repair non-integrable Kahan discretizations. II]{How one can repair non-integrable Kahan discretizations. \\ II. A planar system with invariant curves of degree 6}
\author{Misha Schmalian}
\address{Trinity College, Trinity Street, CB21TQ, Cambridge, UK}
\email{mjs297@cam.ac.uk}
\author{Yuri B. Suris}
\address{Institut f\"ur Mathematik, MA 7-1,
Technische Universit\"at Berlin, Str. des 17. Juni 136,
10623 Berlin, Germany}
\email[Corresponding author]{suris@math.tu-berlin.de}
\author{Yuriy Tumarkin}
\address{Trinity College, Trinity Street, CB21TQ, Cambridge, UK}
\email{yt354@cam.ac.uk}
\begin{document}

\maketitle

\begin{abstract}
We find a novel one-parameter family of integrable quadratic Cremona maps of the plane preserving a pencil of curves of degree 6 and of genus 1. They turn out to serve as Kahan-type discretizations of a novel family of quadratic vector fields possessing a polynomial integral of degree 6 whose level curves are of genus 1, as well. These vector fields are non-homogeneous generalizations of reduced Nahm systems for magnetic monopoles with icosahedral symmetry, introduced by Hitchin, Manton and Murray. The straightforward Kahan discretization of these novel non-homogeneous systems is non-integrable. However, this drawback is repaired by introducing adjustments of order $O(\epsilon^2)$ in the coefficients of the discretization, where $\epsilon$ is the stepsize.
\end{abstract}

\section{Introduction}

The problem of integrable discretization \cite{S} consists of finding, for a given integrable system, a discretization which remains integrable. All conventional discretization methods for ODEs, like Runge-Kutta methods etc., fail to preserve integrability. However, there exists an ``unconventional'' numerical method applicable to any system of ODEs on $\bbR^n$ with a quadratic vector field, known as Kahan discretization, which possesses remarkable properties in this respect. Consider a quadratic ODE
\begin{equation}\label{eq: diff eq gen}
\dot{x}=Q(x)+Bx+c,
\end{equation}
where $Q:\bbR^n\to\bbR^n$ is a vector of quadratic forms, $B$ is an $n\times n$ matrix, and $c\in\bbR^n$. Kahan discretization, introduced in \cite{K}, consists in replacing the time derivative on the left-hand side by the first difference of the numerical approximation $x:\epsilon\mathbb Z\to \mathbb R^n$, while the quadratic expressions on the right-hand side are replaced by symmetric bilinear expressions in terms of $x=x(t)$ and $\t x=x(t+\epsilon)$:
\begin{equation}\label{eq: Kahan gen}
\frac{\widetilde{x}-x}{\epsilon}=Q(x,\widetilde{x})+\frac{1}{2}B(x+\widetilde{x})+c,
\end{equation}
where
\[
Q(x,\widetilde{x})=\frac{1}{2}\big(Q(x+\widetilde{x})-Q(x)-Q(\widetilde{x})\big)
\]
is the symmetric bilinear form corresponding to the quadratic form $Q$. Equation (\ref{eq: Kahan gen}) is {\em linear} with respect to $\widetilde x$ and therefore defines a {\em rational} map $\widetilde{x}=f(x,\epsilon)$. Due to the symmetry of equation (\ref{eq: Kahan gen}) with respect to interchanging $x\leftrightarrow\widetilde{x}$ accompanied by sign inversion $\epsilon\mapsto-\epsilon$, the map $f$ is {\em reversible}:
\begin{equation}\label{eq: reversible}
f^{-1}(x,\epsilon)=f(x,-\epsilon).
\end{equation}
Thus, the map $f$ is {\em birational}. For some reasons which remain not completely clarified up to now, Kahan's method tends to preserve integrability much more often than any other known general purpose discretization scheme.

This was first observed by Hirota and Kimura \cite{HK, KH}, who (being unaware of the work by Kahan) applied this scheme to the Euler top and to the Lagrange top, and observed that the resulting maps are integrable. Since then, integrability properties of Kahan's method when applied to integrable systems (also called ``Hirota-Kimura method'' in this context) were extensively studied, see \cite{PS, PPS1, PPS2, PPS3, PS2, PS3, PS4, PZ, PSS, Z, PSZ} and \cite{CMOQ1, CMOQ2, CMOQ4, CMOQ5, KCMMOQ, KMQ}. Integrability is preserved in an amazing number of cases, but not always.

Simple counterexamples are available already in dimension $n=2$, and can be found among non-homogeneous extensions of the so called reduced Nahm equations introduced in \cite{HMM}. These are the systems of the form
\begin{equation} \label{nahm intro}
\begin{pmatrix} \dot x\\ \dot y \end{pmatrix}
 =\frac{1}{\rho(x,y)}\begin{pmatrix} \partial H/\partial y  \\ -\partial H/\partial x \end{pmatrix},
\end{equation}
where
\begin{equation*}
H(x,y)=\ell_1^{\gamma_1}(x,y)\ell_2^{\gamma_2}(x,y)\ell_3^{\gamma_3}(x,y), \quad \rho(x,y)=\ell_1^{\gamma_1-1}(x,y)\ell_2^{\gamma_2-1}(x,y)\ell_3^{\gamma_3-1}(x,y),
\end{equation*}
with $\gamma_1, \gamma_2, \gamma_3\in\bbR\setminus\{0\}$, and $\ell_i(x,y)=a_ix+b_iy$
are linear forms. Integrability takes place for $(\gamma_1,\gamma_2,\gamma_3)=(1,1,1)$, $(1,1,2)$, and $(1,2,3)$. In all three cases, all integral curves of the system \eqref{nahm intro} are of genus 1. In \cite{PPS2, PZ, CMOQ4} integrabilty was established for the Kahan discretization of all three cases of the reduced Nahm equations.

If $(\gamma_1,\gamma_2,\gamma_3)=(1,1,1)$, one is dealing with a homogeneous cubic Hamiltonian. As discovered in \cite{CMOQ1}, Kahan's discretization remains integrable for arbitrary (i.e., also for non-homogeneous) cubic Hamiltonians.

If $(\gamma_1,\gamma_2,\gamma_3)=(1,1,2)$, one can find non-homogeneous perturbations of the quartic polynomial $H(x,y)$ so that the resulting differential equations \eqref{nahm intro} still have the above mentioned property: all integral curves are of genus 1. A Kahan discretization of the perturbed  (non-homogeneous) system is non-integrable. However, it was shown in \cite{PSZ} that one can adjust the coefficients of the discretization (making them dependent on $\epsilon$ in a non-trivial way) to obtain an integrable Kahan-type discretization.

The present paper is devoted to a similar result for systems of the class $(\gamma_1,\gamma_2,\gamma_3)=(1,2,3)$. The homogeneous system can be taken as
\begin{equation} \label{123 intro}
\left\{\begin{array}{l} \dot x=-2x^2+2xy, \\ \dot{y}=-y^2+2xy. \end{array}\right.
\end{equation}
It possesses an integral of motion of degree 6:
\begin{equation}\label{123 H intro}
H(x,y)=x^2y^3\Big(-\frac{2}{3}x+\frac{1}{2}y\Big),
\end{equation}
whose level sets are curves of genus 1.
The Kahan discretization of this system reads:
\begin{equation} \label{123 map intro}
\renewcommand{\arraystretch}{1.2}
\left\{\begin{array}{l}
 (\t x - x)/  \epsilon=-2\t x x+(\t x y+x\t y), \\
 (\t y - y)/ \epsilon= -\t y y +(\t  xy + x\t y).
\end{array} \right .
\end{equation}
It is integrable, with an integral of motion
\begin{equation}\label{123 integral intro}
H_1(x,y)=\dfrac{H(x,y)}{\big(1-\epsilon^2x^2\big)\big(1-\epsilon^2(x-y)^2\big)\big(1-\epsilon^2(x^2+y^2)\big)}.
\end{equation}

Consider the following non-homogeneous perturbation of system \eqref{123 intro}:
\begin{equation}\label{123 pert intro}
\left\{\begin{array}{l} \dot x= -2x^2+2xy+c, \\ \dot y=-y^2+2xy. \end{array}
\right.
\end{equation}
It has the following integral of motion:
\begin{equation}\label{H pert intro}
H(x,y)  = (xy+c)^2\left(-\frac{2}{3}xy+\frac{1}{2}y^2+\frac{1}{3}c\right),
\end{equation}
with the same property as above (all level sets are curves of genus 1). The Kahan discretization of this system,
\begin{equation}\label{123 pert Kahan}
\left\{\begin{aligned}
(\t x - x)/\epsilon &= -2 x \t x + \left(\t x y + x \t y\right) + c,\\
(\t y - y)/\epsilon &= - y \t y + (\t x y + x \t y),
\end{aligned}
\right.
\end{equation}
generates a non-integrable map. However, the coefficients of this discretization can be adjusted via $O(\epsilon^2)$ terms, to produce an integrable map:
\begin{equation} \label{Kahan adj intro}
\left\{\begin{aligned}
(\t x - x)/\epsilon &= -(2-\epsilon^2c)x \t x  + (1+\epsilon^2c)(\t x y + x \t y)+ c- \epsilon^2c(2+\epsilon^2c) y \t y,\\
(\t y - y)/\epsilon &= - (1+\epsilon^2c)y \t y + (\t x y + x \t y).
\end{aligned}
\right.
\end{equation}
This map, like the unperturbed one \eqref{123 map intro}, has an integral of motion whose level sets are curves of degree 6 and of genus 1 (the irreducible ones).

The presentation is organized as follows. In Section \ref{sect 123}, we consider in detail system \eqref{123 intro} and its Kahan discretization \eqref{123 map intro}, paying special attention to the singularity confinement property of the latter map. In Section \ref{sect QRT root}, we perform, following \cite{PSWZ}, a reduction of the pencil of invariant curves of degree 6 of the Kahan discretization to a pencil of biquadratic curves. This way, the map is shown to be birationally equaivalent to a special QRT root (cf. \cite{QRT, Dui}). In Section \ref{sect gen QRT root}, we show that the relevant geometric and dynamical properties of this QRT root can be found in a one-parameter family of such maps, and then find a corresponding one-parameter family of birationally equivalent Kahan-type maps preserving a pencil of curves of degree 6 and of genus 1. Finally, in Section \ref{sect cont limit}, a continuous limit is performed in those Kahan-type maps, leading to a novel integrable system \eqref{123 pert intro}, with a pencil of invariant curves with the same property (level sets of the non-homogeneous sextic polynomial \eqref{H pert intro} are of genus 1).

\section{A homogeneous (1,2,3) system and its Kahan discretization}
\label{sect 123}

We start with a reduced Nahm system \eqref{123 intro} with $(\gamma_1,\gamma_2,\gamma_3)=(1,2,3)$, obtained by the following choice of the corresponding linear forms:
$$
\ell_1(x,y)=-\frac{2}{3}x+\frac{1}{2}y, \quad \ell_2(x,y)=x, \quad \ell_3(x,y)=y,
$$
so that $H(x,y)$ is as given in \eqref{123 H intro}, and $\rho(x,y)=xy^2$.

Its Kahan discretization is given in \eqref{123 map intro}. Due to homogeneity, we can restrict ourselves to the case $\epsilon=1$,
\begin{equation} \label{123 map}
\renewcommand{\arraystretch}{1.2}
\left\{\begin{array}{l}
 \t x - x=-2\t x x+(\t x y+x\t y), \\
 \t y - y= -\t y y +(\t  xy + x\t y).
\end{array} \right .
\end{equation}
The general case is obtained from this by the re-scaling $(x,y)\mapsto(\epsilon x,\epsilon y)$. A simple computation gives an explicit formula for the map $f$:
$$
\begin{pmatrix} \t x \\ \t y \end{pmatrix}=\begin{pmatrix}  1+2x-y & -x \\ -y & 1-x+y\end{pmatrix}^{-1}\begin{pmatrix} x \\ y \end{pmatrix}
=\frac{1}{\Delta}\begin{pmatrix}  1-x+y & x \\ y & 1+2x-y\end{pmatrix}\begin{pmatrix} x \\ y \end{pmatrix},
$$
or
\begin{equation}\label{123 map explicit}
\t x= \frac{x(1-x+2y)}{\Delta}, \quad \t y = \frac{y(1+3x-y)}{\Delta}, \quad \Delta=1+x-2x^2+2xy-y^2.
\end{equation}
In homogeneous coordinates,
\begin{equation}\label{123 map homog}
f(x,y,z)= \Big[x(z-x+2y):  y(z+3x-y):  z^2+zx-2x^2+2xy-y^2\Big].
\end{equation}
In the following proposition, we collect the relevant information about this map, as found in \cite{PPS2, PZ, CMOQ4,Z}.
\begin{proposition}
The map $f$ given in \eqref{123 map explicit} admits an integral of motion
\begin{equation}\label{123_integral}
H_1(x,y)=\dfrac{H(x,y)}{\big(1-x^2\big)\big(1-(x-y)^2\big)\big(1-(x^2+y^2)\big)}\ ,
\end{equation}
with $H(x,y)=x^2y^3(-\frac{2}{3}x+\frac{1}{2}y)$. The pencil of the level curves $H_1(x,y)=\lambda$, i.e.,
\begin{equation}\label{123 pencil}
H(x,y)-\lambda\big(1-x^2\big)\big(1-(x-y)^2\big)\big(1-(x^2+y^2)\big)=0,
\end{equation}
of $\deg=6$ possesses eleven (distinct) base points given by:
\begin{itemize}
\item six finite base points of multiplicity $1$ on the line $\ell_1=0$:
\begin{equation}\label{123_p1-p6}
p_1=-p_6=\Big(\frac{3}{5},\frac{4}{5}\Big),\quad
p_2=-p_5=\Big(1,\frac{4}{3}\Big), \quad
p_3=-p_4=(3,4),
\end{equation}

\item three base points of multiplicity $2$ on the line $\ell_2=0$, two finite and one at infinity:
\begin{equation}\label{123_p7-p9}
p_7=-p_9=(0,-1), \quad p_8=[0:1:0],
\end{equation}

\item and two finite base points of multiplicity $3$ on the line $\ell_3=0$:
\begin{equation}\label{123_p10-p11}
p_{10}=-p_{11}=(-1,0).
\end{equation}
\end{itemize}
See Fig. \ref{Fig pencil 123} for an illustration.
One has: $\mathcal I(f)=\{p_6,p_9,p_{11}\}$ and $\mathcal I(f^{-1})=\{p_1,p_7,p_{10}\}$. All base points participate in three confined singular orbits of the map $f$:
\begin{eqnarray}
\begin{aligned}\label{123_orbits}
&(p_9p_{11}) \longrightarrow p_1 \longrightarrow p_2 \longrightarrow p_3 \longrightarrow p_4 \longrightarrow p_5 \longrightarrow p_6 \longrightarrow (p_7p_{10}),\\
&(p_6p_{11}) \longrightarrow p_7 \longrightarrow p_8 \longrightarrow p_9 \longrightarrow (p_1p_{10}),\\
&(p_6p_9) \longrightarrow p_{10} \longrightarrow p_{11} \longrightarrow (p_1p_7).
\end{aligned}
\end{eqnarray}
\end{proposition}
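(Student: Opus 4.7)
The plan is to check the four assertions --- integrability of $H_1$, the list of eleven base points, the two indeterminacy sets, and the three confined singular-orbit chains --- using the explicit formulas \eqref{123 map explicit}, \eqref{123 map homog} together with the involution $\sigma(x,y)=(-x,-y)$. The Kahan equations \eqref{123 map intro} are invariant under $(x,y,\t x,\t y,\epsilon)\mapsto(-x,-y,-\t x,-\t y,-\epsilon)$; combined with reversibility \eqref{eq: reversible} this yields the conjugation $f^{-1}=\sigma\circ f\circ\sigma$ at every value of $\epsilon$. The involution pairs the base points ($p_1\leftrightarrow p_6$, $p_2\leftrightarrow p_5$, $p_3\leftrightarrow p_4$, $p_7\leftrightarrow p_9$, $p_{10}\leftrightarrow p_{11}$, with $p_8$ self-conjugate at infinity) and sends each of the three orbit chains to itself with reversed direction, roughly halving the work.

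\emph{Integrability and base points.} For the invariance $H_1\circ f=H_1$, I would substitute \eqref{123 map explicit} into $H_1(\t x,\t y)$; since both the numerator $H$ and the denominator $D(x,y)=(1-x^2)(1-(x-y)^2)(1-x^2-y^2)$ have total degree $6$, clearing the common factor $\Delta^6$ turns the required equality into a polynomial identity in $(x,y)$ of manageable degree, checked directly or by a CAS. For the base points I would intersect the two sextics $\{H=0\}$ and $\{D=0\}$ in $\bbP^2$: the factorisations $H=x^2y^3\ell_1$ and $D=(1-x)(1+x)(1-x+y)(1+x-y)(1-x^2-y^2)$ display $\{D=0\}$ as the union of four lines and one smooth conic, and intersecting each component of $\{D=0\}$ with each of the three factors of $H$ produces precisely the tabulated eleven points together with their intersection multiplicities. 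The B\'ezout tally $6\cdot 1^2+3\cdot 2^2+2\cdot 3^2=36=6\cdot 6$ confirms completeness; the infinite point $p_8=[0:1:0]$ is located by passing to homogeneous coordinates.

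\emph{Indeterminacy and singular orbits.} For $\cI(f)$, I would solve the system of the three homogeneous components in \eqref{123 map homog} for common zeros in $\bbP^2$; a case split on which linear factor of the first two components vanishes reduces each branch to a single extra point coming from $\Delta=0$, yielding precisely $\{p_6,p_9,p_{11}\}$, and then $\cI(f^{-1})=\sigma\cI(f)=\{p_1,p_7,p_{10}\}$ by the conjugation above. The main obstacle, and the most substantive part of the proof, is the confined singular-orbit structure \eqref{123_orbits}. The nontrivial content lies in the ``boundary'' arrows: the line through any two points of $\cI(f)$ must be contracted by $f$ to a single point of $\cI(f^{-1})$, and conversely the exceptional divisor over a point of $\cI(f)$ must be sent onto the line through the appropriate pair of points in $\cI(f^{-1})$. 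I plan to verify each contraction by parametrising the line in question (for instance $\{y=1-x\}$ through $p_9$ and $p_{11}$) and substituting into \eqref{123 map homog}; a common linear factor of the parameter divides all three homogeneous components, and the quotient is a constant point of $\bbP^2$, identified with the appropriate element of $\cI(f^{-1})$. The interior arrows $p_i\to p_{i+1}$ are then mechanical point-substitutions in \eqref{123 map explicit}, half of them made redundant by $\sigma$-symmetry. Finally, the tally $6+3+2=11$ of base points covered by the three chains confirms that no singular dynamics has been overlooked.
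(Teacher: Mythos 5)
The paper offers no proof of this proposition at all --- it is explicitly presented as a collection of facts imported from \cite{PPS2, PZ, CMOQ4, Z} --- so there is no argument of the authors' to compare yours against; what you propose is a self-contained direct verification, and it is sound. All the individual steps check out: the conjugation $f^{-1}=\sigma\circ f\circ\sigma$ with $\sigma(x,y)=(-x,-y)$ does follow from the stated invariance of \eqref{123 map intro} together with \eqref{eq: reversible}, and it correctly accounts for the $\pm$ pairing of the base points and lets you do only half of the orbit computations; the case split on the first two homogeneous components of \eqref{123 map homog} does yield exactly $\{p_6,p_9,p_{11}\}$; the three contracted lines $\{x+y=1\}$, $\{x-2y=1\}$, $\{y-3x=1\}$ do collapse to $p_1$, $p_7$, $p_{10}$ respectively after cancelling the common linear factor against $\Delta$; and the B\'ezout tally $6\cdot1^2+3\cdot2^2+2\cdot3^2=36$ legitimately certifies both that the list of base points is complete and that there are no infinitely near ones (since each local intersection number is forced to equal the product of multiplicities). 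Two small points. First, for the ``blow-up'' arrows $p_6\to(p_7p_{10})$, $p_9\to(p_1p_{10})$, $p_{11}\to(p_1p_7)$ you only gesture at a verification; the cleanest route is to note that for a quadratic Cremona map the blow-up data is determined by the blow-down data (each indeterminacy point is blown up to the line through the images of the two contracted lines passing through it --- exactly the content of the paper's Lemma \ref{lemma Cremona}), so these arrows come for free once the three contractions are checked. Second, your closing claim that the count $6+3+2=11$ ``confirms that no singular dynamics has been overlooked'' is only a consistency check; the actual reason the three chains exhaust the singular orbits is that a quadratic Cremona map contracts exactly three lines, all of which you have accounted for.
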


\begin{figure}
\begin{center}
\includegraphics[width=0.8\textwidth]{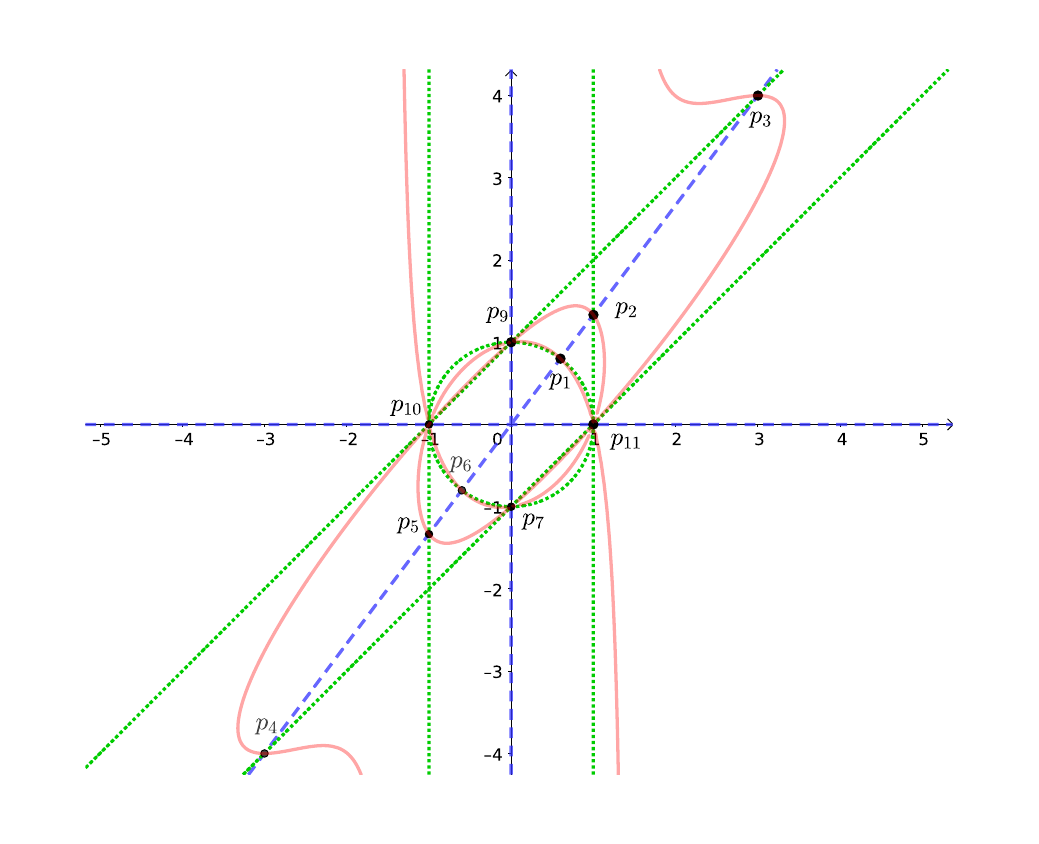}
\caption{Some invariant curves of the sextic pencil \eqref{123 pencil}. Blue: three lines (one of them double and one triple) -- the vanishing set of the numerator of $H_1(x,y)$, i.e., $\lambda=0$. Green: four lines and a conic -- the vanishing set of the denominator of $H_1(x,y)$, i.e., $\lambda=\infty$. Red: a generic curve.}
\label{Fig pencil 123}
\end{center}
\end{figure}

We refer the reader to \cite{BV, D1,D2,DF} for general information about birational (Cremona) maps of $\mathbb C\mathbb P^2$, including the notion of confined singular orbits (related to degree-lowering curves and dynamical degree, or algebraic entropy).

\section{Reduction of the map $f$ to a special QRT root}
\label{sect QRT root}

We use notation
\begin{equation}\label{pencil sextic}
{\mathcal E}_6={\mathcal P}(6;p_1,\ldots,p_6,p_7^2,p_8^2,p_9^2,p_{10}^3, p_{11}^3)
\end{equation}
for the pencil of curves of degree 6 with simple base boints $p_1,\ldots,p_6$, double base points $p_7,p_8,p_9$, and triple base points $p_{10},p_{11}$.
One can simplify such a pencil by applying a quadratic Cremona map $\phi$ with the fundamental points $p_9,p_{10},p_{11}$ (both the triple base points and one of the double base points), cf. \cite{PSWZ}.
\begin{proposition} \label{th sextic to quartic}
Consider a quadratic Cremona map $\phi$ blowing down the lines  $(p_{10}p_{11})$, $(p_9p_{11})$, $(p_9p_{10})$ to points denoted by $q_9,q_{10},q_{11}$, respectively, and blowing up the points $p_9,p_{10},p_{11}$ to the lines $(q_{10}q_{11})$, $(q_9q_{11})$, $(q_9q_{10})$. All other  base points $p_i$, $i=1,\ldots,8$ are regular points of $\phi$ and their images are denoted by $q_i=\phi(p_i)$. The change of variables $\phi$ maps pencil \eqref{pencil sextic} of sextic curves to the pencil
\begin{equation}\label{pencil quartic}
{\mathcal E}_4={\mathcal P}(4;q_1,\ldots,q_6,q_{10},q_{11},q_7^2,q_8^2)
\end{equation}
of quartic curves with eight simple base points and two double base points. The point $q_9$ is not a base point of the latter pencil.
\end{proposition}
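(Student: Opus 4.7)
The plan is to apply the classical transformation formulas for plane algebraic curves under a quadratic Cremona map with three non-collinear fundamental points (see, e.g., \cite{BV, D1, DF}). If $\phi$ has fundamental points $P_1, P_2, P_3$, contracting the opposite sides $(P_jP_k)$ to points $Q_i$ and blowing up $P_i$ to the opposite side $(Q_jQ_k)$, and if $C$ is a plane curve of degree $d$ passing through $P_i$ with multiplicity $m_i$, then the strict transform $\phi_*(C)$ has degree $d' = 2d - m_1 - m_2 - m_3$, with multiplicities $m'_i = d - m_j - m_k$ at the $Q_i$; the multiplicities of $C$ at any regular point of $\phi$ are preserved.

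To apply this to $\mathcal{E}_6$, I first check that $p_9 = (0,1)$, $p_{10} = (-1,0)$, $p_{11} = (1,0)$ are non-collinear (the line $y=0$ through $p_{10}, p_{11}$ misses $p_9$), so that $\phi$ is well-defined as a quadratic Cremona map. Plugging $d = 6$ and $(m_9, m_{10}, m_{11}) = (2, 3, 3)$ into the formulas above yields the new degree $d' = 12 - 2 - 3 - 3 = 4$ and the multiplicities $(m'_{q_9}, m'_{q_{10}}, m'_{q_{11}}) = (0, 1, 1)$. Thus $q_9$ is not a base point, while $q_{10}$ and $q_{11}$ are simple base points of the image pencil. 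Since $p_1, \ldots, p_8 \notin \{p_9, p_{10}, p_{11}\}$, they are regular points of $\phi$, so the images $q_i = \phi(p_i)$ are base points of $\mathcal{E}_4$ with multiplicities inherited from $\mathcal{E}_6$: simple for $i \leq 6$ and double for $i = 7, 8$. Together this gives the stated characteristic of $\mathcal{E}_4$.

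The main obstacle is the careful bookkeeping of base points in the image pencil. The transformation formulas above are standard (derivable, for instance, by resolving $\phi$ as a sequence of three blow-ups at $p_9, p_{10}, p_{11}$ followed by three blow-downs of the strict transforms of the lines $(p_jp_k)$, and tracking intersection numbers on the intermediate surface), so the genuine work is verifying their applicability. One subtlety worth flagging is that some $p_i$ with $i \leq 8$ may happen to lie on a contracted line $(p_jp_k)$, in which case its image coincides with the vertex $q_\ell$ opposite to this line. A direct check here shows $p_3 = (3,4)$ lies on $y = x + 1$, which is precisely the line $(p_9p_{10})$, so that $q_3 = q_{11}$ as points of $\mathbb{CP}^2$. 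This coincidence is consistent with the stated structure of $\mathcal{E}_4$: the simple multiplicity at $q_{11}$ is accounted for precisely by the single transverse intersection of a generic pencil curve with $(p_9p_{10})$ away from $\{p_9, p_{10}\}$, namely at $p_3$ itself.
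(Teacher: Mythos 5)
Your proof is correct and follows essentially the same route as the paper's: the general formulas $d'=2d-m_1-m_2-m_3$ and $m'_i=d-m_j-m_k$ that you invoke are exactly what the paper derives by hand (splitting off the contracted lines from the total transform, then counting residual intersections of $C$ with each line $(p_jp_k)$ via B\'ezout), so the two arguments coincide in substance. One caveat about your closing remark: you are right that $p_3\in(p_9p_{10})$, so $\phi(p_3)$ and $q_{11}$ coincide as points of $\mathbb{CP}^2$; but the correct reading is that $q_3$ is a base point \emph{infinitely near} to $q_{11}$ (all members of $\mathcal{E}_4$ are smooth at $q_{11}$ with a common tangent direction), not that the two entries collapse to a single simple base point. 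Your phrasing "the simple multiplicity at $q_{11}$ is accounted for by the intersection at $p_3$" undercounts: with only nine simple-or-double base points one would get $\sum m_i^2=15\neq 16=\deg^2$, so the pencil could not have all its intersections confined to base points. The extra unit is supplied precisely by the shared tangent at $q_{11}$, i.e.\ by the infinitely near point $q_3$ --- this is exactly what the paper records in the proposition that follows.
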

\begin{proof}
 The total image of a curve $C\in{\mathcal E}_6$ is a curve of degree 12. Since $C$ passes through $p_9,p_{10},p_{11}$ with multiplicities 2,3,3, its total image contains the lines $(q_{10}q_{11})$, $(q_9q_{11})$, $(q_9q_{10})$ with the same multiplicities. Dividing by the linear defining polynomials of all these lines, we see that the proper image of $C$ is a curve of degree $12-8=4$. This curve passes through all points $q_i$, $i=1,\ldots,8$ (for $i=7,8$ with multiplicity 2). The curve $C$ of degree 6 has no other intersections with the line $(p_{10}p_{11})$ different from two triple points $p_{10}$ and $p_{11}$, therefore its proper image does not pass through $q_9$. On the other hand, the curve $C$ of degree 6 has one additional intersection point with each of the lines $(p_9p_{10})$ and $(p_9p_{11})$, different from the double point $p_9$ and the triple point $p_{10}$, respectively $p_{11}$. Therefore, its proper image passes through $q_{11}$, resp. $q_{10}$, with multiplicity 1.
\end{proof}

For the proof of the following Proposition, we will repeatedly use the following lemma.
\begin{lemma}\label{lemma Cremona}
Let $F$ be a quadratic Cremona map with $\mathcal I(F)=\{a,b,c\}$, and let $F$ blow down the lines $(ab)$, $(bc)$, $(ca)$ to the points $C$, $A$, $B$, respectively. Then the image of a generic line under $F$ is a conic through $A,B,C$. The (proper) image of a line through one of the indeterminacy points, say of the line $(ad)$, is  the line $(AD)$, where $D=F(d)$.
\end{lemma}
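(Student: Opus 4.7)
The plan is to use the algebraic description of $F$ via three homogeneous quadratic polynomials, handling the two assertions separately.

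For the first assertion, I would exploit that $F^{-1}$ is itself a quadratic Cremona map, with indeterminacy set $\{A,B,C\}$, blowing down the lines $(BC)$, $(CA)$, $(AB)$ to $a$, $b$, $c$ respectively. Writing $F^{-1}=[G_0:G_1:G_2]$ with each $G_i$ a homogeneous quadratic polynomial vanishing at $A$, $B$, $C$, the preimage under $F^{-1}$ of a line $\ell=\{\alpha x_0+\beta x_1+\gamma x_2=0\}$ is the conic cut out by $\alpha G_0+\beta G_1+\gamma G_2=0$, which automatically passes through $A$, $B$, $C$. Since this preimage equals (the closure of) $F(\ell)$, the image of a generic line under $F$ is such a conic.

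For the second assertion, take $\ell=(ad)$ with $d$ generic (in particular $d\notin(ab)\cup(bc)\cup(ca)$, so that $F(d)\neq A,B,C$). Writing $F=[F_0:F_1:F_2]$ with each $F_i$ vanishing at $a$, $b$, $c$, I would parametrize $\ell$ by $[a+td]$ for $t\in\bbP^1$. Using $F_i(a)=0$, the components become $F_i(a+td)=t\,B_i(a,d)+t^2F_i(d)$, where $B_i$ is the symmetric bilinear form attached to $F_i$. The common factor of $t$ cancels in the projectivization, yielding
\[
F(a+td)=[B_0(a,d)+tF_0(d):B_1(a,d)+tF_1(d):B_2(a,d)+tF_2(d)],
\]
which is a line in $\bbP^2$ parametrized linearly in $t$. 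Two of its points are immediate: at $t=\infty$ one recovers $D=F(d)$, and setting $e=\ell\cap(bc)$ (a unique point, since $d\notin(bc)$ and $a,b,c$ are not collinear) one gets $F(e)=A$ by hypothesis on $F$. Since $A\neq D$, the image line must be $(AD)$.

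The decisive step is the degree drop from $2$ (the generic conic image) to $1$ (when $\ell$ passes through the indeterminacy point $a$); the parametric calculation accomplishes this in one line by exhibiting the extra factor of $t$ shared by all three components $F_i(a+td)$. The remainder of the argument amounts to recognizing the two points $A$ and $D$ on the resulting line. An alternative route is a blow-up analysis on the resolution $F=\pi'\circ\pi^{-1}$, where one computes $(\pi')^*(\text{hyperplane})=2H-E_a-E_b-E_c$ and evaluates it on the proper transform $[H-E_a]$ of $\ell$; the parametric approach is cleaner and avoids introducing the intersection-theoretic machinery.
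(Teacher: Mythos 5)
Your proof is correct and follows essentially the same route as the paper's: the image of a line through $a$ drops from a conic to a line because of the base point at $a$ (you exhibit this via the common factor of $t$ in the parametrization, the paper by noting that the line to which $a$ is blown up splits off the total transform), and the image line is then pinned down as $(AD)$ by exactly the same two points, $D=F(d)$ and $A=F\bigl((ad)\cap(bc)\bigr)$. You additionally supply an argument for the first assertion via $F^{-1}=[G_0:G_1:G_2]$, which the paper treats as standard and does not prove.
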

\begin{proof} The total image of the line $(ad)$ is a conic, but since $a$ is blown up to a line, the proper image is a line. This line has to pass through $D=F(d)$ and through $A$ (since the line $(ad)$ intersects the line $(bc)$ which is blown down to $A$).
\end{proof}

\begin{proposition}
The map $g=\phi\circ f\circ \phi^{-1}$ has three confined singular orbits:
\begin{eqnarray}
\begin{aligned}\label{123_QRT_orbits}
&(q_6q_8) \longrightarrow q_{10}\longrightarrow q_1 \longrightarrow q_2 \longrightarrow q_3 \longrightarrow q_4 \longrightarrow q_5 \longrightarrow q_6 \longrightarrow (q_7q_{10}),\\
&(q_6q_{11}) \longrightarrow q_7 \longrightarrow q_8 \longrightarrow (q_{10}q_{11}),\\
&(q_8q_{11}) \longrightarrow q_{11} \longrightarrow  (q_7q_{11}).
\end{aligned}
\end{eqnarray}
The point $q_9$ is its fixed point, and lies on the line $(q_7q_8)$. Moreover, the points $q_3$ and $q_{11}$ are infinitely near.
\end{proposition}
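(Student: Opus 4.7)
The plan is to derive each of the three singular orbits of $g=\phi\circ f\circ\phi^{-1}$ by transporting, element by element, the corresponding orbit of $f$ from \eqref{123_orbits} through $\phi$. For every $p_i$ with $i\in\{1,\ldots,8\}$ (regular for $\phi$) one has $g(q_i)=\phi(f(p_i))$ directly, so the ``interior'' of each $f$-orbit translates straight across. Whenever a line appears in an $f$-orbit, or whenever $f$ maps into a point of $\mathcal I(\phi)=\{p_9,p_{10},p_{11}\}$, I would invoke Lemma~\ref{lemma Cremona}: the three sides $(p_9p_{10})$, $(p_9p_{11})$, $(p_{10}p_{11})$ of the triangle of indeterminacies collapse to $q_{11},q_{10},q_9$ respectively, while a line $(p_\iota d)$ through a single indeterminacy point $p_\iota$ is mapped to the line joining $\phi(d)$ and the collapse of the opposite triangle side.

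Carrying this mechanism out on the first two $f$-orbits produces $q_{10}\to q_1\to\cdots\to q_6\to(q_7q_{10})$ and $(q_6q_{11})\to q_7\to q_8\to(q_{10}q_{11})$; in the latter, the terminal blow-up line appears because $f(p_8)=p_9\in\mathcal I(\phi)$ is itself blown up by $\phi$ to $(q_{10}q_{11})$. For the third $f$-orbit both interior points $p_{10},p_{11}$ belong to $\mathcal I(\phi)$, so I would instead compute $g(q_{11})$ and $g^{-1}(q_{11})$ by two applications of Lemma~\ref{lemma Cremona} each: starting from $\phi^{-1}(q_{11})=(p_9p_{10})$, one gets $f((p_9p_{10}))=(p_7p_{11})$ and then $\phi((p_7p_{11}))=(q_7q_{11})$, so $g(q_{11})=(q_7q_{11})$; symmetrically $f^{-1}((p_9p_{10}))=(p_8p_{11})$ and $\phi((p_8p_{11}))=(q_8q_{11})$, so $g$ blows down $(q_8q_{11})$ to $q_{11}$.

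The main obstacle is the extra initial segment $(q_6q_8)\to q_{10}$ of the first $q$-orbit, which has no counterpart in \eqref{123_orbits}. To establish it I would identify two conics. On the one hand, $f^{-1}((p_9p_{11}))$ is a conic $C$ (the line $(p_9p_{11})$ avoids $\mathcal I(f^{-1})=\{p_1,p_7,p_{10}\}$); it contains $\mathcal I(f)=\{p_6,p_9,p_{11}\}$, and, because $f(p_8)=p_9$ and $f(p_{10})=p_{11}$ both land on the target line, it also contains $p_8$ and $p_{10}$. On the other hand, $\phi^{-1}((q_6q_8))$ is a conic through the three $\phi^{-1}$-collapse targets $p_9,p_{10},p_{11}$ together with $p_6=\phi^{-1}(q_6)$ and $p_8=\phi^{-1}(q_8)$. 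Since five points in general position determine a unique conic, the two conics coincide; applying $\phi$ then collapses this common conic, along its component $(p_9p_{11})$, to $q_{10}$, showing that $g$ blows down $(q_6q_8)$ to $q_{10}$.

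The auxiliary assertions are quick to verify. The equality $g(q_9)=q_9$ follows because $\phi^{-1}(q_9)=(p_{10}p_{11})$ is $f$-invariant by Lemma~\ref{lemma Cremona}. The collinearity $q_9\in(q_7q_8)$ follows because $p_7,p_8,p_9$ lie on $x=0$, and Lemma~\ref{lemma Cremona} sends this line to $(q_7q_9)$, which must also contain $q_8$. Finally, $q_3$ and $q_{11}$ are infinitely near because $p_3=(3,4)$ satisfies $y=x+1$, placing it on the collapsed line $(p_9p_{10})$; hence $\phi(p_3)$ coincides with $q_{11}$ as a point of $\mathbb{C}\mathbb{P}^2$, but is distinguished from it in the first infinitesimal neighbourhood by the tangent direction of $(p_9p_{10})$ at $p_3$.
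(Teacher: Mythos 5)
Your proposal is correct and follows essentially the same route as the paper: transporting each singular orbit of $f$ through $\phi$ via Lemma~\ref{lemma Cremona}, identifying $\phi^{-1}((q_6q_8))$ with the conic through $p_6,p_8,p_9,p_{10},p_{11}$ for the extra segment $(q_6q_8)\to q_{10}$, using the $f$-invariance of $(p_{10}p_{11})$ for the fixed point $q_9$, the line $\{x=0\}$ for the collinearity of $q_7,q_8,q_9$, and $p_3\in(p_9p_{10})$ for the infinitely near point. The only (harmless) variation is that you identify the blown-down conic as $f^{-1}((p_9p_{11}))$ and invoke uniqueness of the conic through five points, where the paper instead computes the proper $f$-image of that conic directly by degree counting and splitting off exceptional lines.
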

\begin{proof}
We have:
$$
q_i\xrightarrow{\phi^{-1}} p_i\overset{f}{\longrightarrow} p_{i+1}\overset{\phi}{\longrightarrow} q_{i+1}, \quad i=1,\ldots,5.
$$
Further,
$$
q_6\xrightarrow{\phi^{-1}} p_6\overset{f}{\longrightarrow} (p_7p_{10})\overset{\phi}{\longrightarrow} (q_7q_{10})
$$
(applying Lemma \ref{lemma Cremona} for $\phi$);
$$
q_7\xrightarrow{\phi^{-1}} p_7\overset{f}{\longrightarrow} p_8\overset{\phi}{\longrightarrow} q_8;
$$
$$
q_8\xrightarrow{\phi^{-1}} p_8\overset{f}{\longrightarrow} p_9\overset{\phi}{\longrightarrow} (q_{10}q_{11});
$$
$$
q_9\xrightarrow{\phi^{-1}} (p_{10}p_{11})\overset{f}{\longrightarrow} (p_{10}p_{11})\overset{\phi}{\longrightarrow} q_9
$$
(applying Lemma \ref{lemma Cremona} for $f$);
$$
q_{10}\xrightarrow{\phi^{-1}} (p_9p_{11})\overset{f}{\longrightarrow} p_1\overset{\phi}{\longrightarrow} q_1;
$$
$$
q_{11}\xrightarrow{\phi^{-1}} (p_9p_{10})\overset{f}{\longrightarrow} (p_7p_{11})\overset{\phi}{\longrightarrow} (q_7q_{11})
$$
(applying Lemma \ref{lemma Cremona} for $f$, then for $\phi$).

Next, we consider lines which are blown down by $g$:
$$
(q_6q_8) \xrightarrow{\phi^{-1}} C(p_6,p_8,p_9,p_{10},p_{11}) \overset{f}{\longrightarrow} (p_9p_{11})\overset{\phi}{\longrightarrow} q_{10}
$$
(indeed, the total $f$-image of the conic is a curve of degree 4; however, three lines split off, being the blow-ups of $p_6$, $p_9$, $p_{11}$; thus, the proper image is the line through $f(p_8)=p_9$ and $f(p_{10})=p_{11}$);
$$
(q_6q_{11}) \xrightarrow{\phi^{-1}} (p_6p_{11}) \overset{f}{\longrightarrow} p_7\overset{\phi}{\longrightarrow} q_7
$$
(applying Lemma \ref{lemma Cremona} for $\phi^{-1}$);
$$
(q_8q_{11}) \xrightarrow{\phi^{-1}} (p_8p_{11}) \overset{f}{\longrightarrow} (p_9p_{10})\overset{\phi}{\longrightarrow} q_{11}
$$
(applying Lemma \ref{lemma Cremona} for $\phi^{-1}$, then for $f$).

The fact that $q_3$ and $q_{11}$ are infinitely near follows from the fact that $p_3\in (p_9p_{10})$, the latter line being blown down to $q_{11}$ by $\phi$.

It remains to show that $q_9\in(q_7q_8)$. For this, observe that the total $\phi$-image of $(p_7p_8)$ is the conic $C(q_7,q_8,q_9,q_{10},q_{11})$. However, since $p_9\in (p_7p_8)$, the blow-up of $p_9$ splits off this conic. This is the line $(q_{10}q_{11})$, and it does not contain any of the points $q_7$, $q_8$, $q_9$. Thus, the proper $\phi$-image of $(p_7p_8)$ is a line containing the latter three points, which are therefore collinear.
\end{proof}


For an actual computation of the map $\phi$, we  can assume, without loss of generality, that pencil  \eqref{pencil quartic} consists of  symmetric biquadratics, i.e., its double points are
$$
q_7=\phi(p_7)=[0:1:0], \quad q_8=\phi(p_8)=[1:0:0],
$$
while the points
$$
q_9=\phi((p_{10}p_{11}))=\phi(\{y=0\}), \quad q_{11} =\phi((p_{9}p_{10}))=\phi(\{y-x=1\})
$$
lie on the symmetry axis $u=v$. This still leaves us with one free parameter. It can be chosen so that
\begin{equation}\label{phi}
\phi(x,y,z) = \big[  z^2-x^2-y^2+x y :  z^2-x^2-y z : x y \big].
\end{equation}
A direct computation with this formula gives:
$$
q_1=\big(1,-\tfrac{1}{3}\big), \quad q_2=\big(-\tfrac{1}{3},-1\big), \quad q_4=\big(-1,-\tfrac{1}{3}\big), \quad q_5=\big(-\tfrac{1}{3},1\big),
$$
$$
q_6=(1,3), \quad q_7=[0:1:0], \quad q_8=[1:0:0], \quad q_9=[1:1:0], \quad q_{10}=(3,1),
$$
$$
q_{11}=(-1,-1), \quad q_3>q_{11}\; (\text{slope}\; -1).
$$
The latter notation means that the point $q_3$ is infinitely near to $q_{11}$ and corresponds to the tangent line $\{v=-u-2\}$ there.

Now it remains to compute the map $g=\phi^{-1}\circ f\circ \phi$, i.e., the map $f$ in the coordinates $[u:v:w]$. A direct computation shows that, in the non-homogeneous coordinates, $g(u,v,1)=[\t u: \t v :1]$ with
\begin{equation}\label{QRT root}
 \t u=v, \quad \t v=\frac{uv-u-2}{2u-v+1},
\end{equation}
and admits an integral of motion
\begin{equation}\label{integral QRT root}
K(u,v)=\frac{3(u-v)^2-2(u+v)-4}{(u^2-1)(v^2-1)}.
\end{equation}
Thus, all base points lie on the four lines $\{u=\pm 1\}$, $\{v=\pm 1\}$, while the eight finite base points $q_1,\ldots,q_6,q_{10},q_{11}$ lie on the conic (parabola) $\{3(u-v)^2-2(u+v)-4=0\}$. See Fig. \ref{Fig biquadratics}.

\begin{figure}
\begin{center}
\includegraphics[width=0.6\textwidth]{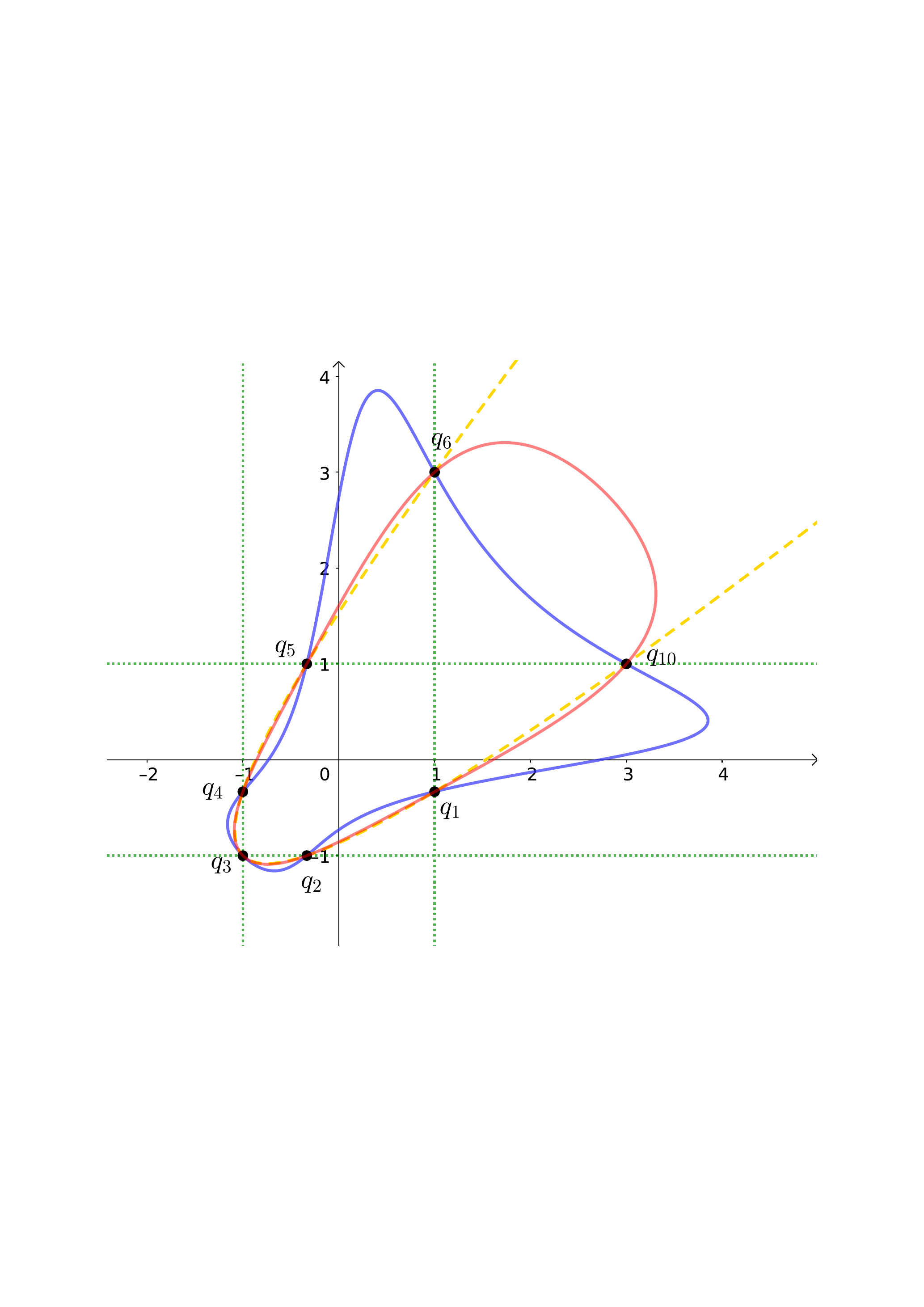}
\caption{Two biquadratics of the pencil $\{K(u,v)=\lambda\}$ with $K(u,v)$ from \eqref{integral QRT root}. All biquadratics pass through $q_{11}=(-1,-1)$ with the slope $-1$ (corresponding to $q_3$). Yellow: the conic through the eight base points $q_1,\ldots,q_6,q_{10},q_{11}$. Green: four lines $\{u=\pm 1\}$, $\{v=\pm 1\}$.}
\label{Fig biquadratics}
\end{center}
\end{figure}

\section{Generalization of the QRT root}
\label{sect gen QRT root}

We try to generalize the map of the previous section. All objects found here will be one-parameter perturbations (with the parameter $c$) of the corresponding objects from the previous section. We will refrain from indicating this by an extra $c$ in the notation (to keep it as brief as possible). However, the reader should keep in mind that the unperturbed situation corresponds to $c=0$.

The idea is to stay in the class of symmetric QRT roots of $\deg=2$: in non-homogeneous coordinates, $g(u,v,1)=[\t u: \t v:1]$ with
\begin{equation}\label{QRT root gen}
 \widetilde{u}=v, \quad \widetilde{v}=\frac{\alpha uv+\beta u-1}{u-\alpha v-\beta},
\end{equation}
which admit an integral of motion
\begin{equation}\label{integral QRT root gen}
K(u,v)=\frac{\alpha(\alpha+1)(u^2+v^2-1)-(\alpha+1)uv+\beta(u+v)-\beta^2}{(u^2-1)(v^2-1)}.
\end{equation}
Note that map \eqref{QRT root} corresponds to $\alpha=1/2$, $\beta=-1/2$. As a characteristic feature we choose the existence of a short singular orbit  (the third one in \eqref{123_QRT_orbits}):
$$
(q_8q_{11}) \longrightarrow q_{11} \longrightarrow  (q_7q_{11}),
$$
i.e., of a point $q_{11}$ which belongs both to $\mathcal I(g)$ and to $\mathcal I(g^{-1})$. One easily computes:
$$
\mathcal I(g)=\Big\{q_8, \Big(1,\frac{1-\beta}{\alpha}\Big), \Big(-1,-\frac{1+\beta}{\alpha}\Big)\Big\},
$$
$$
\mathcal I(g^{-1})=\Big\{q_7, \Big(\frac{1-\beta}{\alpha},1\Big), \Big(-\frac{1+\beta}{\alpha},-1\Big)\Big\},
$$
where
$$
q_7=[0:1:0], \quad q_8=[1:0:0].
$$
We have a one-parameter generalization of the previous case, with
$$
q_{11}=(-1,-1)\in \mathcal I(g)\cap \mathcal I(g^{-1}),
$$
under the condition
\begin{equation}\label{cond}
\alpha=1+\beta.
\end{equation}
In what follows, we parametrize the coefficients $\alpha$, $\beta$ according to
\begin{equation}\label{parameter}
\beta=\frac{c-1}{2}, \quad \alpha=\frac{c+1}{2}.
\end{equation}

\begin{proposition}
Under condition \eqref{parameter}, the map $g$ given in \eqref{QRT root} has three confined singular orbits as in \eqref{123_QRT_orbits}. Moreover, the point $q_3$ is infinitely near to $q_{11}$ (with the slope $-1$). The map $g$ has a fixed point
$$
q_9=\Big(\frac{1}{c},\frac{1}{c}\Big).
$$
The pencil of invariant curves $\{K(u,v)=\lambda\}$ of the map $g$ is as in \eqref{pencil quartic}. The eight finite base points $q_1,\ldots,q_6,q_{10},q_{11}$ lie on the conic given by the numerator of $K(u,v)$.
\end{proposition}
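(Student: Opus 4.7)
\emph{Plan.} All five assertions—the indeterminacy sets together with the three confined orbits, the fixed point $q_9$, the infinitely near structure of $q_3$, and the pencil description—reduce to direct computation from \eqref{QRT root gen}--\eqref{integral QRT root gen}, plus one leading-order analysis at $q_{11}$ needed to continue the orbit past $q_3$.

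\emph{Indeterminacy, fixed point, and orbit tracing.} To find $\cI(g)$, I would solve the simultaneous vanishing of the numerator $\alpha uv+\beta u-1$ and denominator $u-\alpha v-\beta$ of $\t v$ in \eqref{QRT root gen}. Elimination yields the quadratic $\alpha^2 v^2+2\alpha\beta v+(\beta^2-1)=0$, whose roots become $(3-c)/(c+1)$ and $-1$ under the parametrisation \eqref{parameter}; together with $q_8=[1:0:0]$ from the leading behaviour at infinity, this gives $\cI(g)=\{q_8,q_6,q_{11}\}$ with $q_6=(1,(3-c)/(c+1))$ and $q_{11}=(-1,-1)$. The $u\leftrightarrow v$ symmetry of the map then produces $\cI(g^{-1})=\{q_7,q_{10},q_{11}\}$ with the analogous coordinates. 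A substitution $u=v=1/c$ makes both the numerator and denominator of $\t v$ proportional to $1-c^2$, verifying that $q_9=(1/c,1/c)$ is fixed. Iterating $g$ from $q_{10}$ yields in closed form
\[
q_1=\bigl(1,\tfrac{c-1}{c+3}\bigr),\quad q_2=\bigl(\tfrac{c-1}{c+3},-1\bigr),\quad q_3=(-1,-1)=q_{11},
\]
and, after the blow-up continuation described next, $q_4=(-1,(c-1)/(c+3))$, $q_5=((c-1)/(c+3),1)$, $q_6\in\cI(g)$, so $g$ blows $q_6$ up to the vertical line $(q_7q_{10})$. At $c=0$ this matches the data of Section~\ref{sect QRT root}. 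The short orbit is immediate from $q_{11}\in\cI(g)\cap\cI(g^{-1})$, and the intermediate orbit comes from the projective form of $g$, which gives $g([0:1:0])=[1:0:0]$ on reading off the leading terms, combined with Lemma~\ref{lemma Cremona} for the line contractions.

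\emph{Infinitely near point and pencil.} Since $q_3$ equals $q_{11}$ as a plane point, the continuation $q_3\to q_4$ must be performed on the blow-up of $q_{11}$. I would verify the slope $-1$ assertion by substituting $u=-1+s$, $v=-1-s$ into \eqref{QRT root gen}: using $\alpha=1+\beta$, both the numerator and denominator of $\t v$ vanish linearly in $s$, with ratio tending to $\beta/(2+\beta)=(c-1)/(c+3)$ as $s\to 0$, which reproduces $q_4$. The complementary pencil-side check is that $\partial N/\partial u$ and $\partial N/\partial v$ at $q_{11}$ both simplify to $-2\alpha^2$, so every pencil curve is smooth at $q_{11}$ with common tangent $(u+1)+(v+1)=0$, forcing the infinitely near base point to sit at slope $-1$. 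For the pencil itself, $K=N/D$ with conic $N$ and $D=(u^2-1)(v^2-1)$ a union of four lines; clearing denominators gives a biquadratic pencil of total degree $4$, with $q_7$ and $q_8$ as double base points (by a local computation in the chart at infinity). By Bezout the conic $N=0$ meets the four lines $D=0$ in $8$ finite points, and substituting the orbit coordinates of the previous paragraph into $N$ identifies these with $q_1,\ldots,q_6,q_{10},q_{11}$, yielding the pencil \eqref{pencil quartic}. The only step requiring more than routine substitution—and hence the main obstacle—is the blow-up analysis of the infinitely near point $q_3$, which is handled by the short leading-order expansion above.
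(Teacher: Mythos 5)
Your proposal is correct and follows essentially the same route as the paper's proof: a direct computation of $\mathcal I(g)$ and $\mathcal I(g^{-1})$ from \eqref{QRT root gen} under \eqref{parameter}, iteration of $g$ from $q_{10}$ to trace the long orbit, verification of the fixed point $q_9=(1/c,1/c)$, and a local analysis at $q_{11}$ for the infinitely near point $q_3$. Your added details (the leading-order expansion $u=-1+s$, $v=-1-s$ giving the limit $\beta/(2+\beta)=(c-1)/(c+3)$, the gradient computation showing all pencil curves share the slope $-1$ tangent at $q_{11}$, and the Bezout count placing $q_1,\dots,q_6,q_{10},q_{11}$ on the conic $N=0$) simply make explicit what the paper dismisses as "easy" or "straightforward" computation.
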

\begin{proof} The second singular orbit in \eqref{123_QRT_orbits} is confirmed by an easy computation. Let us compute the first (long) singular orbit, starting with the remaining point from $\mathcal I(g^{-1})$, that is, with
$$
q_{10}=\Big(\frac{3-c}{1+c},1\Big)\in \mathcal I(g^{-1}).
$$
We compute:
$$
g(q_{10})=q_1=\Big(1,-\frac{1-c}{3+c}\Big), \quad g(q_{1})=q_2=\Big(-\frac{1-c}{3+c},-1\Big),
$$
$$
g(q_2)=q_3>q_{11}=(-1,-1)\;\text{with\;\ slope}\; -1,
$$
$$
g(q_3)=q_4=\Big(-1,-\frac{1-c}{3+c}\Big), \quad g(q_{4})=q_5=\Big(-\frac{1-c}{3+c},1\Big),
$$
$$
g(q_5)=q_6=\Big(1,\frac{3-c}{1+c}\Big)\in \mathcal I(g).
$$
One easily computes also that $g^{-1}$ blows up the point $q_{10}$ to the line $(q_6q_8)$, while $g$ blows up the point $q_6$ to the line $(q_7q_{10})$.

The fixed point $q_9$ is given by a straightforward computation (note that for $c\neq 0$, the point $q_9$ does not lie on $(q_7q_8)$, the line at infinity).
\end{proof}
All this is illustrated on Fig. \ref{Fig biquadratics pert}.
\medskip

\begin{figure}
\begin{center}
\includegraphics[width=0.6\textwidth]{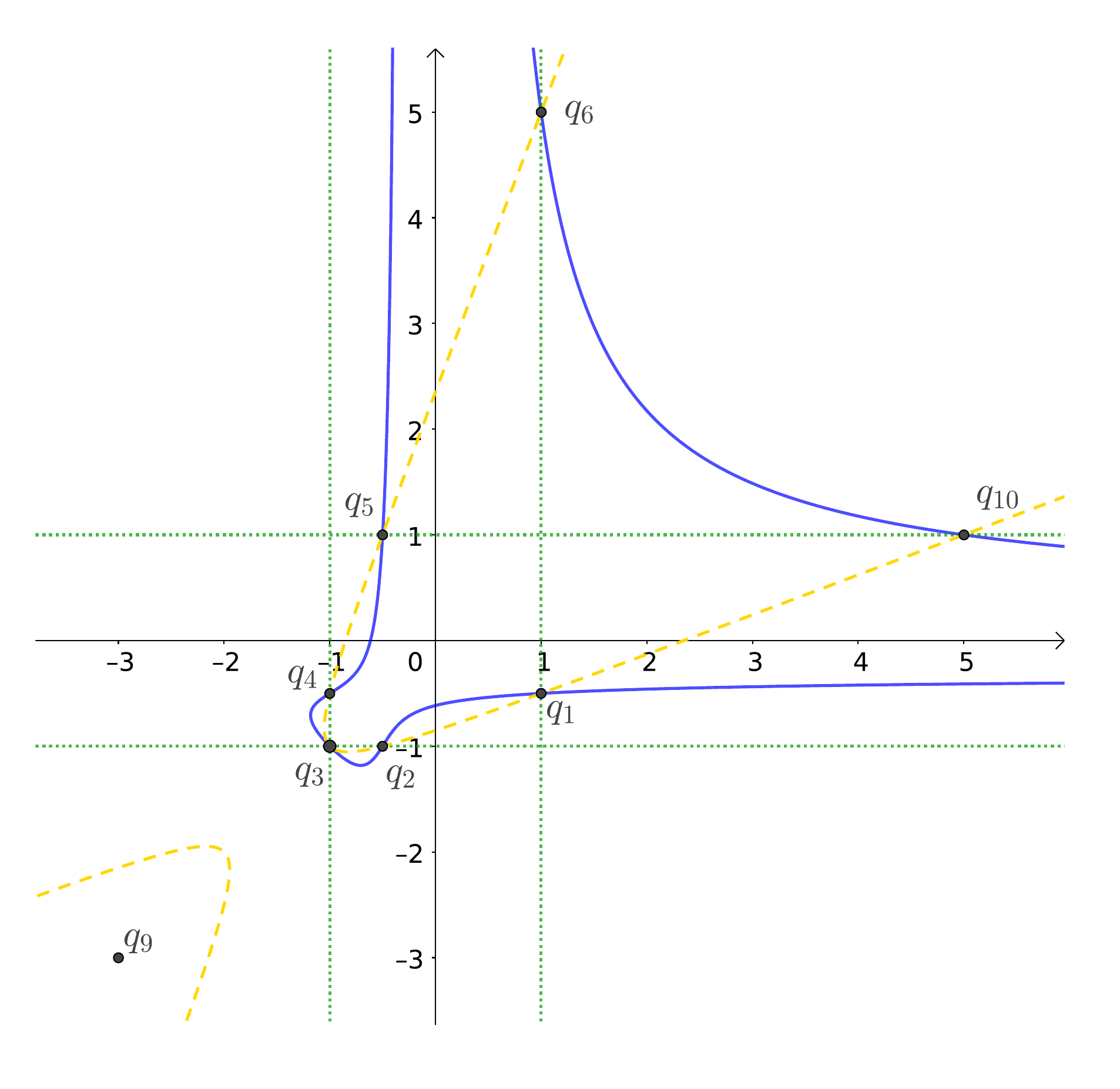}
\caption{Geometry of base points of the biquadratic pencil $\{K(u,v)=\lambda\}$ with $K(u,v)$ from \eqref{integral QRT root gen} with $\beta=-2/3$, $\alpha=1/3$, that is, $c=-1/3$. All biquadratics pass through $q_{11}=(-1,-1)$ with the slope $-1$ (corresponding to $q_3$). Yellow: the conic through the eight base points $q_1,\ldots,q_6,q_{10},q_{11}$. Green: four lines $\{u=\pm 1\}$, $\{v=\pm 1\}$.}
\label{Fig biquadratics pert}
\end{center}
\end{figure}

There holds the following converse to Proposition \ref{th sextic to quartic}. We perform a quadratic Cremona change of variables based at $q_9,q_{10},q_{11}$ (recall that $q_9$ is not a base point of the pencil of invariant curves, while $q_{10}$ and $q_{11}$ are simple base points, the latter having an infinitely close base point $q_3$).
\begin{proposition} \label{th quartic to sextic}
Consider a quadratic Cremona map $\phi^{-1}$ blowing down the lines  $(q_{10}q_{11})$, $(q_9q_{11})$, $(q_9q_{10})$ to points denoted by $p_9,p_{10},p_{11}$, respectively, and blowing up the points $q_9,q_{10},q_{11}$ to the lines $(p_{10}p_{11})$, $(p_9p_{11})$, $(p_9p_{10})$. All other  base points $q_i$, $i=1,\ldots,8$ are regular points of $\phi^{-1}$ and their images are denoted by $p_i=\phi^{-1}(q_i)$.
The change of variables $\phi^{-1}$ maps the pencil \eqref{pencil quartic} of biquadratics to the pencil \eqref{pencil sextic} of sextic curves.
\end{proposition}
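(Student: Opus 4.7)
The plan is to mirror the degree/multiplicity computation from the proof of Proposition \ref{th sextic to quartic}, applied now to $\phi^{-1}$ in place of $\phi$. Fix a generic $C \in \mathcal{E}_4$, so that $C$ has degree $4$ with multiplicity $1$ at each of $q_1,\ldots,q_6,q_{10},q_{11}$, multiplicity $2$ at $q_7,q_8$, and does not pass through $q_9$. What one must show is that the proper image $\phi^{-1}(C)$ has degree $6$ and the multiplicities prescribed in \eqref{pencil sextic}; birationality of $\phi$ then automatically sends pencils to pencils.

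For the degree, the total transform of $C$ under $\phi^{-1}$ has degree $2\cdot 4 = 8$. Since $C$ passes through $q_{10}$ and $q_{11}$ with multiplicity $1$ each (and avoids $q_9$), the exceptional lines $(p_9 p_{11})$ and $(p_9 p_{10})$ split off with multiplicity $1$, while $(p_{10}p_{11})$ does not appear. Hence $\phi^{-1}(C)$ has degree $8 - 1 - 1 = 6$. The regular points $p_i = \phi^{-1}(q_i)$, $i = 1, \ldots, 8$, inherit the multiplicities of $C$, giving simple base points at $p_1, \ldots, p_6$ and double base points at $p_7, p_8$. At the three new base points I would apply the standard rule that the multiplicity of the proper transform at the image of a blown-down line $(q_b q_c)$ equals $\deg(C) - m_b - m_c$. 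This yields multiplicity $4 - 1 - 1 = 2$ at $p_9$ (from the line $(q_{10}q_{11})$) and multiplicity $4 - 0 - 1 = 3$ at each of $p_{10}$, $p_{11}$ (from $(q_9 q_{11})$ and $(q_9 q_{10})$). A bookkeeping check $\sum m_i^2 = 6 + 3 \cdot 4 + 2 \cdot 9 = 36 = 6^2$ confirms that the resulting base data is consistent with a pencil of sextics.

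The one place where care is needed — and which I regard as the main obstacle in a fully careful write-up — is the infinitely near base point $q_3 > q_{11}$ with slope $-1$ in the quartic pencil. Under $\phi^{-1}$ the point $q_{11}$ is blown up to the line $(p_9 p_{10})$, and the tangent direction at $q_{11}$ distinguished by $q_3$ is sent to an honest point of $\bbP^2$ lying on that line; by the forward-direction computation of Proposition \ref{th sextic to quartic} this point must be identified with $p_3$ (which indeed lies on $(p_9 p_{10})$). Thus the infinitely near datum of $\mathcal{E}_4$ at $q_{11}$ resolves into the genuine simple base point $p_3$ of $\mathcal{E}_6$, and does not contribute extra multiplicity at $p_{11}$ — completing the match with \eqref{pencil sextic}.
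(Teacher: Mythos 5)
Your proposal is correct and follows essentially the same route as the paper's proof: degree $2\cdot 4-1-1=6$ for the proper transform, inherited multiplicities at $p_1,\dots,p_8$, and the multiplicities $2,3,3$ at $p_9,p_{10},p_{11}$ obtained by counting the residual intersections of $C$ with the three blown-down lines (your rule $\deg C-m_b-m_c$ is exactly that count). Your two additions --- the consistency check $\sum m_i^2=36$ and the explicit resolution of the infinitely near point $q_3>q_{11}$ into the honest base point $p_3\in(p_9p_{10})$ --- are correct and go slightly beyond the paper, which leaves the $q_3$ issue implicit here.
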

\begin{proof}
 The total image of a curve $C$ of the pencil \eqref{pencil quartic} is a curve of degree 8. Since $C$ passes through $q_{10},q_{11}$, its total image contains the lines $(p_{9}p_{11})$, $(p_9p_{10})$. Dividing by the linear defining polynomials of these two lines, we see that the proper image of $C$ is a curve of degree $6$. This curve passes through all points $p_i$, $i=1,\ldots,8$ (for $i=7,8$ with multiplicity 2). The curve $C$ of degree 4 intersects the line $(q_{10}q_{11})$ at two points $q_{10}$, $q_{11}$, and two further points, therefore its proper image passes through $p_9$ with multiplicity 2. On the other hand, the curve $C$ of degree 4 has three additional intersection points with each of the lines $(q_9q_{10})$ and $(q_9q_{11})$, different from the points $q_{10}$, respectively $q_{11}$. Therefore, its proper image passes through $p_{11}$, resp. $p_{10}$, with multiplicity 3.
\end{proof}

It remains to conjugate the QRT root $g$ by the quadratic change of variables $\phi^{-1}$.
\begin{proposition}
The map $f=\phi^{-1}\circ g\circ \phi$ is a quadratic Cremona map with three confined singular orbits, as in \eqref{123_orbits}. The eight base points $p_i$, $i=1,\ldots,6,10,11$ lie on a conic.
\end{proposition}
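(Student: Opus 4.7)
The plan is to trace the orbits \eqref{123_orbits} through $f=\phi^{-1}\circ g\circ\phi$ step by step, feeding in the singular orbits of $g$ from \eqref{123_QRT_orbits} and applying Lemma \ref{lemma Cremona} whenever a line meets an indeterminacy point. The argument mirrors that of the earlier proposition (which went from $f$ to $g$), but with the roles reversed. The interior arrows $p_i\to p_{i+1}$ ($i=1,\dots,5$) and $p_7\to p_8$ are immediate via $p\xrightarrow{\phi}q\xrightarrow{g}q'\xrightarrow{\phi^{-1}}p'$; the arrows $p_8\to p_9$ and $p_{10}\to p_{11}$ involve a single blow-up of $g$ paired with a blow-down of $\phi^{-1}$, e.g.\ $p_8\xrightarrow{\phi}q_8\xrightarrow{g}(q_{10}q_{11})\xrightarrow{\phi^{-1}}p_9$. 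For the endpoints of the long orbit, $\phi$ blows down $(p_9p_{11})$ to $q_{10}$, $g(q_{10})=q_1$, $\phi^{-1}(q_1)=p_1$, while $\phi(p_6)=q_6$ is blown up by $g$ to $(q_7q_{10})$, and since this line passes through $q_{10}\in\mathcal I(\phi^{-1})$, the Lemma gives its image as $(p_7p_{10})$. The second orbit is analogous, with $\phi((p_6p_{11}))=(q_6q_{11})$ blown down to $q_7$ by $g$, and $p_9\xrightarrow{\phi}(q_{10}q_{11})\xrightarrow{g}(q_1q_{10})\xrightarrow{\phi^{-1}}(p_1p_{10})$ by three applications of the Lemma.

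The only delicate step is $p_{11}\to(p_1p_7)$ in the third orbit. Here $\phi$ blows $p_{11}$ up to the line $(q_9q_{10})$, which avoids $\mathcal I(g)=\{q_6,q_8,q_{11}\}$; hence $g$ maps it to a conic $\Gamma$ through $\mathcal I(g^{-1})=\{q_7,q_{10},q_{11}\}$ and also through $g(q_9)=q_9$ and $g(q_{10})=q_1$. Since $\Gamma$ passes through all three fundamental points of $\phi^{-1}$, its proper $\phi^{-1}$-transform has degree $2\cdot 2-3=1$ and must be the line through $\phi^{-1}(q_7)=p_7$ and $\phi^{-1}(q_1)=p_1$, i.e.\ $(p_1p_7)$. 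The symmetric step $(p_6p_9)\xrightarrow{\phi}(q_6q_9)\xrightarrow{g}(q_9q_{11})\xrightarrow{\phi^{-1}}p_{10}$ uses the Lemma twice.

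With the three orbits verified, one reads off $\mathcal I(f)=\{p_6,p_9,p_{11}\}$ and $\mathcal I(f^{-1})=\{p_1,p_7,p_{10}\}$, each consisting of three simple base points blown up to lines; the Noether identities $\sum m_i=3(d-1)$ and $\sum m_i^2=d^2-1$ for a plane Cremona map then force $d=2$. For the conic through the eight finite base points, I transport the conic $\mathcal C_q$ given by the numerator of $K(u,v)$ (which carries $q_1,\dots,q_6,q_{10},q_{11}$) under $\phi^{-1}$: since $\mathcal C_q$ meets exactly two fundamental points of $\phi^{-1}$, namely $q_{10}$ and $q_{11}$ but not the fixed point $q_9$, its total transform has degree $4$, two blow-up lines split off, and what remains is a conic passing through $p_1,\dots,p_6$ (direct images) and through $p_{10},p_{11}$ by the standard blow-down mechanism. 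The main obstacle is the bookkeeping for the $p_{11}\to(p_1p_7)$ step, where $\Gamma$ absorbs all three fundamental points of $\phi^{-1}$ at once; everything else is routine application of the Lemma.
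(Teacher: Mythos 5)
Your proposal is correct and follows essentially the same route as the paper: tracing the three singular orbits through the conjugation $\phi^{-1}\circ g\circ\phi$ via repeated use of Lemma \ref{lemma Cremona}, handling the delicate step $p_{11}\to(p_1p_7)$ through the conic $\Gamma=C(q_9,q_1,q_7,q_{10},q_{11})$ whose proper $\phi^{-1}$-transform is a line, and obtaining the conic through $p_1,\ldots,p_6,p_{10},p_{11}$ as the proper transform of the conic carrying $q_1,\ldots,q_6,q_{10},q_{11}$. The only addition is your explicit Noether-relations argument for $\deg f=2$, which the paper leaves implicit.
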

\begin{proof}
We have:
$$
p_i \overset{\phi}{\longrightarrow} q_i\overset{g}{\longrightarrow} q_{i+1}\xrightarrow{\phi^{-1}} p_{i+1}, \quad i=1,\ldots,5.
$$
Further,
$$
p_6 \overset{\phi}{\longrightarrow} q_6\overset{g}{\longrightarrow} (q_7q_{10}) \xrightarrow{\phi^{-1}} (p_7p_{10})
$$
(apply Lemma \ref{lemma Cremona} for $\phi^{-1}$);
$$
p_7 \overset{\phi}{\longrightarrow} q_7\overset{g}{\longrightarrow} q_{8}\xrightarrow{\phi^{-1}} p_{8};
$$
$$
p_8 \overset{\phi}{\longrightarrow} q_8\overset{g}{\longrightarrow} (q_{10}q_{11})\xrightarrow{\phi^{-1}} p_{9};
$$
$$
p_9 \overset{\phi}{\longrightarrow} (q_{10}q_{11}) \overset{g}{\longrightarrow} (q_1q_{10}) \xrightarrow{\phi^{-1}} (p_1p_{10})
$$
(apply Lemma \ref{lemma Cremona} first for $g$, then for $\phi^{-1}$);
$$
p_{10} \overset{\phi}{\longrightarrow} (q_9q_{11}) \overset{g}{\longrightarrow} (q_9q_{10}) \xrightarrow{\phi^{-1}} p_{11}
$$
(apply Lemma \ref{lemma Cremona} for $g$, taking into account that $q_9$ is a fixed point);
$$
p_{11} \overset{\phi}{\longrightarrow} (q_{9}q_{10}) \overset{g}{\longrightarrow} C(q_9,q_1,q_{10},q_7,q_{11}) \xrightarrow{\phi^{-1}} (p_1p_7)
$$
(apply Lemma \ref{lemma Cremona} first for $g$, taking into account that $q_9$ is a fixed point and $q_1=g(q_{10})$; then, the total $\phi^{-1}$-image of the conic is a curve of degree 4; however, three lines split off, being the blow-ups of $q_9$, $q_{10}$, $q_{11}$; thus, the proper image is the line through $\phi^{-1}(q_1)=p_1$ and $\phi^{-1}(q_7)=p_7$).

Next, we consider lines which are blown down by $f$:
$$
(p_6p_9) \overset{\phi}{\longrightarrow} (q_6q_9) \overset{g}{\longrightarrow} (q_{11}q_9) \xrightarrow{\phi^{-1}} p_{10}
$$
(apply Lemma \ref{lemma Cremona} first for $\phi$, then for $g$, taking into account that $q_9$ is a fixed point);
$$
(p_6p_{11}) \overset{\phi}{\longrightarrow} (q_6q_{11}) \overset{g}{\longrightarrow} q_7 \xrightarrow{\phi^{-1}} p_{7}
$$
(apply Lemma \ref{lemma Cremona} for $\phi$);
$$
(p_9p_{11}) \overset{\phi}{\longrightarrow} q_{10} \overset{g}{\longrightarrow} q_1 \xrightarrow{\phi^{-1}} p_1.
$$

It remains to show that the points $p_1,\ldots,p_6,p_{10},p_{11}$ lie on a conic. For this, we observe that the total $\phi^{-1}$-image of the conic $C$ through $q_1,\ldots,q_6,q_{10},q_{11}$ is a curve of $\deg=4$, from which two lines split off (blow-ups of $q_{10}$, $q_{11}$). Thus, the proper image is a conic. This conic contains $p_1=\phi^{-1}(q_1),\ldots,p_6=\phi^{-1}(q_6)$. It also contains $p_{10}$ and $p_{11}$ as the consequence of the fact that $C$ has additional intersection points with both blown-down lines $(q_9q_{11})$ and $(q_9q_{10})$, apart from $q_{11}$ and $q_{10}$, respectively.
\end{proof}

To make concrete computations, we normalize $\phi^{-1}$ by the following conditions:
\begin{equation}\label{p9 10 11}
 p_9=(0,1), \quad p_{10}=(-1,0), \quad p_{11}=(1,0),
\end{equation}
and
\begin{equation}\label{p7}
p_7 = (0,-1).
\end{equation}
Then a straightforward computation gives:
\begin{equation}
\begin{split}
\phi(x,y,z) = \Big[ & (1+2c-c^2) x y - x^2 - y^2 + z^2 :\\
& 2 c x y -(1-c^2) y z  -  c^2 y^2 - x^2 + z^2 :\\
& (1+c^2) x y - c(x^2+y^2-z^2) \Big],
\end{split}
\end{equation}
and for the further base points given by $p_i = \phi^{-1}(q_i)$, we find:
\begin{align}
p_1 &= -p_6 = \left(\frac{(1+c)(3+c)}{5-c^2}, \frac{4}{5-c^2}\right), \label{p1 6}\\
p_2 &= -p_5 = \left(\frac{(1+c) (3-c)}{ (1-c)(3+c)}, \frac{4}{(3+c)(1-c)}\right), \label{p2 5}\\
p_3 &= -p_4 = \left(\frac{3+c}{1-c}, \frac{4}{1-c}\right), \label{p3 4}\\
p_8 &= [c:1:0]. \label{p8}
\end{align}

\begin{theorem}
The map $f=\phi^{-1}\circ g\circ \phi$ is given by
\begin{equation}
\begin{split}
f= \Big[ & \big( x-c(y-z)\big) \big(z-x+(2+c) y \big):\\
&  y \big( (1+c) (z-y)+(3-c) x \big):\\
& z^2+(1-c) xz - (2-c) x^2+(1+c)(2-c) xy  - y^2  \Big].
\end{split}
\end{equation}
In the non-homogeneous coordinates, the map $f(x,y,1)=[\t x: \t y: 1]$ satisfies the following bilinear (Kahan-type) relations:
\begin{equation} \label{Kahan adj}
\left\{\begin{aligned}
\t x - x &= -(2-c)x \t x - c(2+c) y \t y + (1+c)(\t x y + x \t y)+ c,\\
\t y - y &= - (1+c)y \t y + (\t x y + x \t y).
\end{aligned}
\right.
\end{equation}
It possesses an integral of motion
\begin{equation}\label{H1 pert}
H_1(x,y)=\frac{\big(C_1(x,y)\big)^2 C_2(x,y)}{\big(1-(x-cy)^2\big)\big(1-(x-y)^2\big)\big(1-(x^2+y^2-2cxy)\big)},
\end{equation}
where
\begin{eqnarray}
C_1(x,y) & = & (1+c^2) x y + c (1-x^2-y^2), \\
C_2(x,y) & = & -2(1-c-c^2)xy+\frac{1}{2}(3-c-3c^2-c^3)y^2-cx^2+c.
\end{eqnarray}
The base points of the pencil of invariant curves $\{H_1(x,y)=\lambda\}$ are given in \eqref{p9 10 11}, \eqref{p7}, and \eqref{p1 6}--\eqref{p8}.
The conic $\{C_1(x,y)=0\}$ passes through $p_7,p_8,p_9,p_{10},p_{11}$ (it is the $\phi^{-1}$-image of the line $(q_7q_8)$), while the conic $\{C_2(x,y)=0\}$ passes through eight base points $p_1,\ldots,p_6,p_{10},p_{11}$.
\end{theorem}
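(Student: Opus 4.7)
My plan is a constructive computation using the explicit form of the Cremona change of variables $\phi$ to transport information between the QRT root $g$ and the desired Kahan-type map $f$. The first task is to pin $\phi$ down. By Proposition~\ref{th quartic to sextic}, $\phi$ is the quadratic Cremona map with indeterminacy set $\{p_9,p_{10},p_{11}\}$ blowing these points up to the lines $(q_{10}q_{11})$, $(q_9q_{11})$, $(q_9q_{10})$, respectively. Together with the normalization \eqref{p9 10 11} and the extra condition \eqref{p7} that $\phi(p_7)=q_7=[0:1:0]$, writing $\phi=[\phi_1:\phi_2:\phi_3]$ with $\phi_i$ quadratic in $(x,y,z)$ produces a linear system for the coefficients whose solution (unique up to overall scaling) gives the displayed formula for $\phi$.

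Once $\phi$ is in hand, I would compute $f=\phi^{-1}\circ g\circ\phi$ directly, using $g$ from \eqref{QRT root gen} with $\alpha,\beta$ as in \eqref{parameter}. A priori the composition of three quadratic Cremona maps has apparent degree $8$, but the fact that $f$ is a quadratic Cremona map (from the preceding proposition on its singular orbits) guarantees that a common polynomial factor of degree $6$ can be divided out. Reading off the resulting formula for $f(x,y,1)=[\t x:\t y:1]$ and rearranging into polynomial form in $(x,y,\t x,\t y)$ then yields the bilinear Kahan-type relations \eqref{Kahan adj}.

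For the integral, I would use that $K\circ g=K$ forces $K\circ\phi$ to be $f$-invariant. Substituting $\phi$ into \eqref{integral QRT root gen}, both numerator and denominator of $K\circ\phi$ become degree-$8$ homogeneous polynomials in $(x,y,z)$ sharing a common quadratic factor; cancellation leaves a degree-$6$ rational expression matching \eqref{H1 pert}, and the splitting of its numerator as $C_1^2\cdot C_2$ can be verified by direct polynomial division once the denominator has been expressed as the three explicit quadratic factors.

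Finally, the base-point data follow from $p_i=\phi^{-1}(q_i)$ for $i=1,\ldots,8$, evaluated with the coordinates of $q_i$ recorded in the preceding proposition. The geometric identification of $C_1$ and $C_2$ follows from the Cremona calculus of Lemma~\ref{lemma Cremona}: since the line $(q_7q_8)$ avoids $q_9,q_{10},q_{11}$, its proper $\phi^{-1}$-image is a conic through $p_7=\phi^{-1}(q_7)$, $p_8=\phi^{-1}(q_8)$, and the three fundamental points $p_9,p_{10},p_{11}$ of $\phi$; while the conic through $q_1,\ldots,q_6,q_{10},q_{11}$ contains two of the fundamental points of $\phi^{-1}$, so two exceptional lines split off from its total quartic image, leaving a conic through $p_1,\ldots,p_6,p_{10},p_{11}$. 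The main obstacle throughout is the careful bookkeeping of common factors, exceptional divisors, and multiplicities in the composition and pullback steps; modulo that, everything is routine.
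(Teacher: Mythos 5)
Your proposal is correct and follows essentially the same route as the paper, whose proof of this theorem is simply the declaration ``a straightforward symbolic computation'': the normalization of $\phi$ via \eqref{p9 10 11} and \eqref{p7}, the conjugation $f=\phi^{-1}\circ g\circ\phi$ with cancellation of the spurious degree, the pullback of $K$ to obtain $H_1$, and the identification of $C_1$ and $C_2$ via Lemma \ref{lemma Cremona} are exactly the steps the paper's surrounding text sets up. Your outline merely makes explicit the bookkeeping the authors leave implicit.
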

\begin{proof}
A straightforward symbolic computation.
\end{proof}

On Fig. \ref{Fig pencil 123 perturbed} one can see several invariant curves $\{H_1(x,y)=\lambda\}$ of the map $f$.

\begin{figure}
\begin{center}
\includegraphics[width=0.8\textwidth]{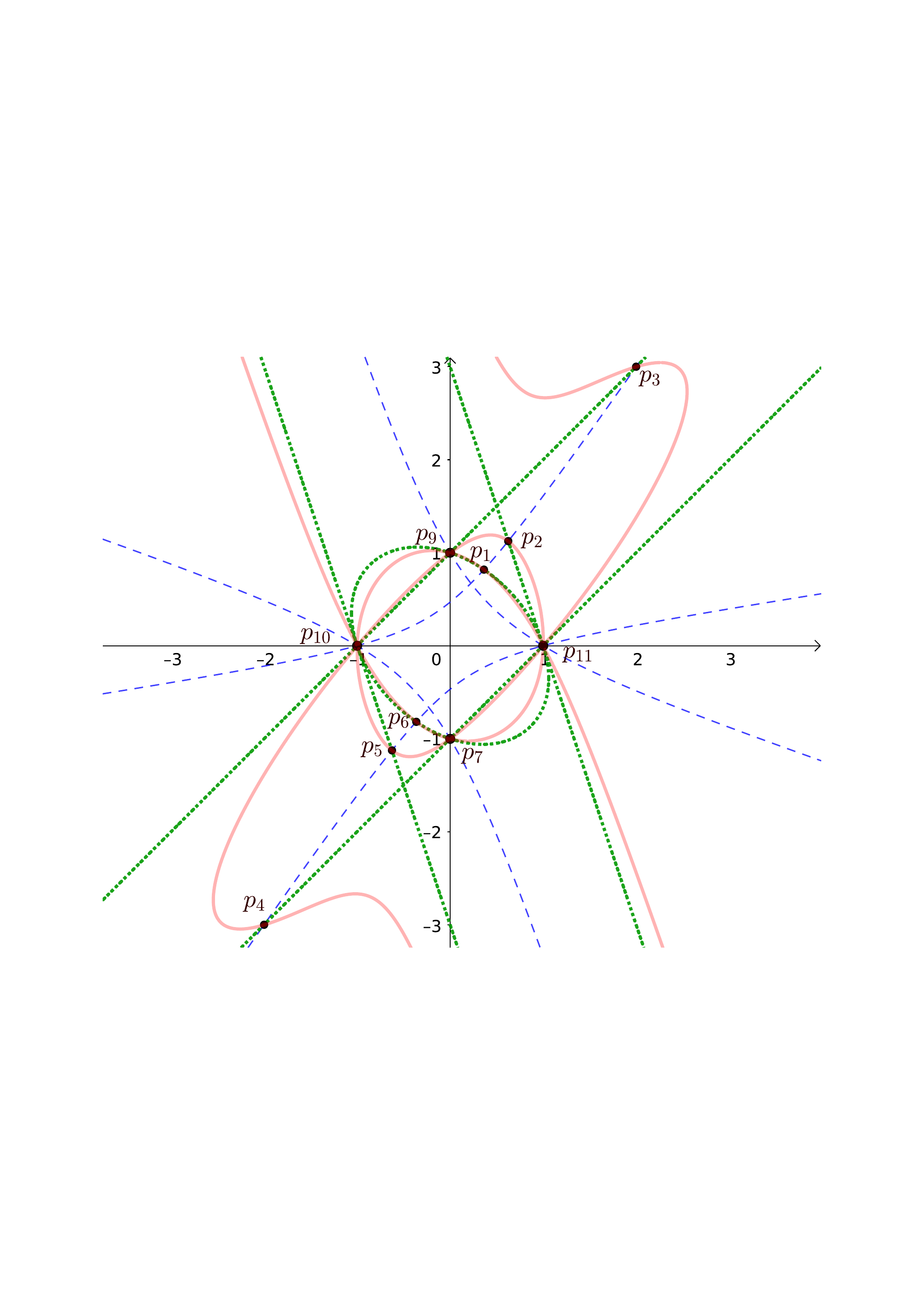}
\caption{Some invariant curves of the sextic pencil $H_1(x,y)=\lambda$ with $H_1(x,y)$ from \eqref{H1 pert}, $c=-1/3$. Blue: two conics (one of them with multiplicity 2) -- the vanishing set of the numerator of $H_1(x,y)$, i.e., $\lambda=0$. Green: four lines and a conic -- the vanishing set of the denominator of $H_1(x,y)$, i.e., $\lambda=\infty$. Red: a generic curve.}
\label{Fig pencil 123 perturbed}
\end{center}
\end{figure}

\section{Continuous limit}
\label{sect cont limit}

Re-scaling $(x,y)\mapsto(\epsilon x,\epsilon y)$ and $c\mapsto \epsilon^2 c$, we arrive at system \eqref{Kahan adj intro}, which in the limit $\epsilon\to 0$ is a discretization of system \eqref{123 pert intro}. The latter can be written as
\begin{equation}
\begin{pmatrix} \dot x \\ \dot y \end{pmatrix}
=
\frac{1}{(xy+c)y} \begin{pmatrix} \partial H/\partial y \\ -\partial H/\partial x\end{pmatrix}
\end{equation}
with
\begin{equation}\label{H pert}
H(x,y) = (xy+c)^2\left(-\frac{2}{3}xy+\frac{1}{2}y^2+\frac{1}{3}c\right).
\end{equation}
This is a one-parameter (inhomogeneous) perturbation of system \eqref{123 intro}. Like for the unperturbed system, all level sets $\{H(x,y)=\lambda\}$ of the integral of motion \eqref{H pert} are curves of genus 1 (and of degree 6). Thus, map \eqref{Kahan adj intro} is a non-trivial integrable Kahan-type discretization of \eqref{123 pert intro}.

Integrability of map \eqref{Kahan adj intro} is in a contrast to non-integrability of the straightforward Kahan discretization \eqref{123 pert Kahan} of \eqref{123 pert intro}.
\begin{proposition}
The map $f$ generated by bilinear equations \eqref{123 pert Kahan} is non-integrable, in the sense that its singularities are not confined.
\end{proposition}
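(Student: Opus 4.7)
The plan is to exhibit one explicit non-confined singular orbit of the map $f$ determined by \eqref{123 pert Kahan}; this is all one needs for non-integrability in the singularity-confinement sense. Throughout I would work at $\epsilon=1$, noting that the general case follows from the rescaling $(x,y)\mapsto(\epsilon x,\epsilon y)$, $c\mapsto\epsilon^2 c$. The key structural remark is that the line $\{y=0\}$ is $f$-invariant: this is immediate from the second bilinear equation in \eqref{123 pert Kahan}, since $y=0$ forces $\t y=0$. On this line the first bilinear relation reduces to the one-dimensional M\"obius iteration $\t x=(x+c)/(1+2x)$.

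Next, I would identify the singular structure along $\{y=0\}$. Substituting $(x,y)=(1,0)$ kills every entry of the Cramer system for $(\t x,\t y)$, so $p_{11}=(1,0)\in\mathcal I(f)$ for every $c$; symmetrically $p_{10}=(-1,0)\in\mathcal I(f^{-1})$. A short direct check on the simultaneous vanishing loci of the three homogeneous components of $f$ (and of $f^{-1}$) restricted to $\{y=0\}$ shows that for generic $c$ these are the only indeterminacy points of $f$ and $f^{-1}$ on the invariant line. Substituting $y=3x+1+c$ into the formulas for $\t x$ and $\t y$, the common factor $5x^2+(3+4c)x+c(c+2)$ cancels between the numerator and denominator of $\t x$, yielding $\t x\equiv -1$ and $\t y\equiv 0$ identically. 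Hence the line $L_c:=\{y=3x+1+c\}$ is an exceptional curve of $f$ contracted to $p_{10}$; at $c=0$ this recovers the line $(p_6p_9)$ opening orbit (3) of \eqref{123_orbits}.

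The singular orbit $L_c\to p_{10}\to f(p_{10})\to f^2(p_{10})\to\cdots$ therefore lives entirely in $\{y=0\}$, and confinement would demand some $f^n(p_{10})=p_{11}$. Writing $u_n$ for the $x$-coordinate of $f^n(p_{10})$, this is the condition $u_n(c)=1$, with $u_{n+1}=(u_n+c)/(1+2u_n)$ and $u_0=-1$. A direct computation gives $u_1=1-c$ and $u_2=1/(3-2c)$, so $u_n-1$ is a nonzero rational function of $c$ for each $n\ge 1$; hence $\{c:u_n(c)=1\}$ is finite for each $n$, and the countable union $E:=\bigcup_{n\ge 1}\{c:u_n(c)=1\}$ is itself countable. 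For any $c\notin E$ the M\"obius orbit of $-1$ never reaches $1$, so the singular orbit is infinite and non-confined. The only point requiring genuine care is the identification of $L_c$ and the check that no further indeterminacy of $f$ lies on $\{y=0\}$; both reduce to the elementary factorisations indicated above, the rest being pure M\"obius arithmetic on the invariant line.
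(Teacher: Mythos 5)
Your proposal is correct and follows essentially the same route as the paper: restrict to the invariant line $\{y=0\}$, observe that confinement would force the M\"obius orbit $\t x=(x+c)/(1+2x)$ of $x=-1$ to reach $x=1$, and note that each condition $u_n(c)=1$ excludes only finitely many $c$. The extra work you do --- exhibiting the contracted line $L_c=\{y=3x+1+c\}$ explicitly and checking the cancellation of $5x^2+(3+4c)x+c(c+2)$ --- is a correct and slightly more detailed identification of the non-confined singular orbit than the paper's, but not a different argument.
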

\begin{proof} To show this, we restrict ourselves to the case $\epsilon=1$. The resulting quadratic Cremona map has three singularities, $p^+=(1,0)$ and two further points not lying on the line $\{y=0\}$. Likewise, the inverse map has three singularities, $p^-=(-1,0)$ and two further points not lying on the line $\{y=0\}$. Observe that the line $\{y=0\}$ is invariant. Thus, for the singularities to be confined, we need that some $f^n(p^-)=p^+$ for some $n\in\mathbb N$. The restriction of the map to the line $\{y=0\}$ is given by $\t x -x =-2x\t x+c$, or $\t x=\varphi(x)=(x+c)/(2x+1)$. One easily sees that, for a generic $c$, the orbit of $x=-1$ under this M\"obius transformation does not hit $x=1$. Indeed, $\varphi^n(-1)=1$ is a polynomial equation of degree $n$ for $c$. Thus, for all $c$ but a countable set this equation is not satisfied for any $n\in\mathbb N$.
\end{proof}

\section{Conclusions}

The results of the present paper confirm that the phenomenon discovered and described in \cite{PSZ} is not isolated, namely that in case of non-integrability of the standard Kahan discretization (when applied to an integrable system), its coefficients can be adjusted to restore integrability. Recall that the definition of Kahan's discretization includes a very straightforward dependence on the small stepsize $\epsilon$. Namely, it only appears in the denominator of the differences $(\t x-x)/\epsilon$ which approximate the derivatives $\dot x$, compare \eqref{eq: diff eq gen} and \eqref{eq: Kahan gen}. On the contrary, coefficients of the bilinear expressions on the right hand side of \eqref{eq: Kahan gen} are traditionally taken to literally coincide with the coefficients of the quadratic vector fields on the right hand side of \eqref{eq: diff eq gen}. This discretization method preserves integrability much more frequently than one would expect a priori, but not always. Our examples show that, if the straightforward recipe fails to preserve integrability, certain adjustments of the coefficients by quantities of the magnitude $O(\epsilon^2)$ may allow to restore integrability. Further extending the list of examples and finding their systematic explanation in terms of addition laws on Abelian varieties remains an important and entertaining task for the future.

This work was done in the frame of a summer research project of MS and YT at Technische Universit\"at Berlin in the Summer-Fall 2020 (which, due to the pandemic, was performed online). Research of YS is supported by the DFG Collaborative Research Center TRR 109 ``Discretization in Geometry and Dynamics''.

{}
\end{document}